\documentclass[aps,reprint,amsfonts, amssymb, amsmath,pra, superscriptaddress, twocolumn,longbibliography,nofootinbib]{revtex4-1}

\usepackage{xcolor}

\usepackage{float}

\usepackage[shortlabels]{enumitem}

\usepackage{braket}
\usepackage{amsthm}
\usepackage{mathtools}
\usepackage{url}
\usepackage{physics}
\usepackage{graphicx}
\usepackage[left=16mm,right=16mm,top=35mm,columnsep=15pt]{geometry} 

\usepackage[T1]{fontenc}

\usepackage{bm}

\usepackage{silence}
\newcommand*{\eh}{\mathrm{End\, }(\mathcal{H})}

\def\ad{^{\dagger}}

\newcommand{\fsnull}[1]{}
\newcommand{\old}[1]{}

\addtocontents{toc}{\protect\setcounter{tocdepth}{1}}

\usepackage{hyperref}
\usepackage[toc,page,header]{appendix}

\addtocontents{toc}{\protect\setcounter{tocdepth}{1}}

\usepackage[makeroom]{cancel}
\definecolor{C1}{RGB}{52, 89, 149}
\definecolor{C2}{RGB}{251, 77, 61}
\definecolor{C3}{RGB}{3, 206, 164}
\definecolor{C4}{RGB}{202, 21, 81}
\definecolor{C5}{RGB}{202, 21, 81}
\definecolor{C6}{RGB}{202, 51, 121}
\hypersetup{colorlinks=true, linkcolor=C6, citecolor=C6, urlcolor=C6}

\usepackage{tikz}
\tikzset{every picture/.style=remember picture}

\usepackage[utf8]{inputenc}
\usepackage{graphicx}
\usepackage{xcolor}
\usepackage{amsmath}
\usepackage{amsthm}
\usepackage{bm}
\usepackage{bbm}
\usepackage{comment}
\usepackage{mathdots}
\usepackage{lipsum}
\usepackage{verbatim}
\usepackage{natbib}
\usepackage{nccmath}
\usepackage{amsfonts}
\usepackage{thm-restate}
\usepackage{thmtools}
\usepackage{ytableau}

\usepackage{amssymb}
\usepackage{dsfont}

\newcommand{\Var}{{\rm Var}}

\renewcommand{\geq}{\geqslant}
\renewcommand{\leq}{\leqslant}

\newcommand{\bs}{\textsf{BS}}

\newcommand{\mcl}{\mathcal{L}}

\newcommand{\mcf}{\mathcal{F}}
\newcommand{\mco}{\mathcal{O}}

\newcommand{\mch}{\mathcal{H}}
\newcommand{\mcm}{\mathcal{M}}

\newcommand{\mcp}{\mathcal{P}}
\newcommand{\mcd}{\mathcal{D}}
\newcommand{\mce}{\mathcal{E}}

\newcommand{\mcz}{\mathcal{Z}}
\newcommand{\mcr}{\mathcal{R}}

\newcommand{\mbc}{\mathbb{C}}
\newcommand{\mbr}{\mathbb{R}}

\newcommand{\mbe}{\mathbb{E}}

\def\be{\begin{equation}}
\def\ee{\end{equation}}
\def\bs{\begin{split}}
\def\e{\end{split}}
\def\ba{\begin{eqnarray}}
\def\bea{\begin{eqnarray}}

\def\tea{\end{eqnarray}}
\def\ea{\end{eqnarray}}
\def\eea{\end{eqnarray}}

\def\g{\mathfrak{g}}

\newcommand{\id}{\mathds{1}}

\newcommand{\sbraket}[2]{ \langle#1 | #2 \rangle}

\addtocontents{toc}{\protect\setcounter{tocdepth}{0}}

\def\mg{\mathsf{MG}(n)}

\def\sh{{\rm sh}}

\def\g{\gamma}

\DeclareMathOperator*{\expect}{\mathbb{E}}

\def\be{\begin{equation}}
\def\te{\end{equation}}
\def\ee{\end{equation}}
\def\ba{\begin{eqnarray}}
\def\bea{\begin{eqnarray}}

\def\tea{\end{eqnarray}}
\def\ea{\end{eqnarray}}
\def\eea{\end{eqnarray}}

\begin{document}

\makeatletter
\newif\ifinappendixtoc  

\newcommand{\tableofcontentsappendixonly}{%
  \begingroup
    \inappendixtocfalse

    \@ifundefined{l@section}{}{%
      \let\ao@l@section\l@section
      \def\l@section##1##2{\ifinappendixtoc \ao@l@section{##1}{##2}\fi}%
    }
    \@ifundefined{l@subsection}{}{%
      \let\ao@l@subsection\l@subsection
      \def\l@subsection##1##2{\ifinappendixtoc \ao@l@subsection{##1}{##2}\fi}%
    }
    \@ifundefined{l@subsubsection}{}{%
      \let\ao@l@subsubsection\l@subsubsection
      \def\l@subsubsection##1##2{\ifinappendixtoc \ao@l@subsubsection{##1}{##2}\fi}%
    }

    \def\AppendixTOCMark{\global\inappendixtoctrue}%

    \tableofcontents
  \endgroup
}
\makeatother

\title{
Practical fermionic shadows enabled by improved sample-complexity bounds 
}

\author{Maxwell West}
\thanks{westm@lanl.gov}
\affiliation{Theoretical Division, Los Alamos National Laboratory, Los Alamos, New Mexico 87545, USA}
\affiliation{Quantum Science Center, Oak Ridge, TN 37931, USA}

\author{Su Yeon Chang}
\affiliation{Theoretical Division, Los Alamos National Laboratory, Los Alamos, New Mexico 87545, USA}

\author{Luke Coffman}
\affiliation{Theoretical Division, Los Alamos National Laboratory, Los Alamos, New Mexico 87545, USA}
\affiliation{Department of Physics, Harvard University, Cambridge, MA 02138, USA}
\affiliation{School of Engineering and Applied Sciences, Harvard University, Cambridge, MA 02138, USA}

\author{Mart\'{i}n Larocca}
\affiliation{Theoretical Division, Los Alamos National Laboratory, Los Alamos, New Mexico 87545, USA}
\affiliation{Quantum Science Center, Oak Ridge, TN 37931, USA}

\author{M. Cerezo}
\affiliation{Information Sciences, Los Alamos National Laboratory, Los Alamos, New Mexico 87545, USA}
\affiliation{Quantum Science Center, Oak Ridge, TN 37931, USA}

\begin{abstract}
Classical shadow tomography is widely touted as supplying a family of methods for extracting information from quantum systems with polynomially scaling sample-complexities. In the current era of quantum computers possessing on the order of hundreds of qubits, however, polynomial scaling can nonetheless be  prohibitive. Thus, there is a strong practical need for obtaining sample-complexity bounds which are as tight as possible. Here we address this in the case of fermionic (matchgate) shadows. For an arbitrary observable $O$ of Majorana degree $2k$, we improve the previously known sample-complexity bound of $\mathcal{O}(n^{2k}\|O\|_\infty^2)$ to $\mathcal{O}(n^{k}\|O\|_\infty^2)$, which is asymptotically tight. For example, for the estimation of the energy per mode of an open  fermionic 50-site Hubbard chain with hopping and on site strenghts respectively given by $t=1$, $V=4$, and for a target additive precision of 0.1, this   reduces the  number of required shots from $\sim 10^9$   to $\sim  10^5$. That is, the new bound reduces the required number of shots by approximately $99.98\%$.
\end{abstract}

\maketitle


\section{Introduction}
State-of-the-art programmable quantum computers are becoming increasingly large. On the one hand, this has led to a sequence of increasingly plausible (if   not yet universally regarded as entirely unimpeachable) claims of quantum advantage in the form of  sampling from various distributions, some already several years old~\cite{arute2019quantum,morvan2023phase,google2025observation,hangleiter2023computational}. On another, however, the ability of current generation quantum computers to return (say) energy estimates of molecules  of relevance to modern quantum chemistry in an advantageous fashion remains more contested~\cite{hartnett2026fast}. Towards the goal of usefully extracting such expectation values from quantum computers, a leading framework is that of \textit{classical shadows}~\cite{huang2020predicting,aaronson2019shadow,west2026classical,bertoni2024shallow,low2022classical,zhao2021fermionic,wan2022matchgate,west2026particle,van2022hardware,zhao2024group,king2024triply,jerbi2023shadows,koh2022classical,chen2021robust,hearth2024efficient,chan2022algorithmic,sauvage2024classical,helsen2023thrifty,kunjummen2023shadow,west2024random,grier2024sample,brandao2020fast,bertoni2024shallow,vitale2024estimation,somma2024shadow,west2025real,bringewatt2025classical}.  \\

Classical shadows refers to a broad class of protocols~\cite{west2026classical} of significantly varying experimental complexity. For example, local Clifford shadows~\cite{huang2020predicting} are of enviable scalability, and have now been experimentally demonstrated on systems of nearly 100 qubits~\cite{fischer2025large,votto2026learning}. Contrarily, and despite   the initial optimism surrounding their potential applications to problems in quantum chemistry,   experimental  implementations of free-fermionic (matchgate) shadows~\cite{wan2022matchgate,zhao2021fermionic,low2022classical,west2026particle} have lagged somewhat behind, with demonstrations restricted to only 16 qubits~\cite{zhao2026quantum}. This is not inexplicable, however. In particular, there are two primary axes along which matchgate shadows demand considerably greater complexity than other common shadow protocols. The first is the circuit depth required to implement the randomised measurements entailed by each protocol. On a one-dimensional circuit architecture with nearest-neighbour interactions, these depths are constant for local Cliffords~\cite{huang2020predicting}, logarithmic for global Cliffords~\cite{schuster2024random}, and linear\footnote{Although we do note that, in first quantisation, the closely related particle-preserving fermionic shadows can be implemented in log depth~\cite{west2026particle}.} for matchgate shadows~\cite{west2025no,grevink2025will,west2026ambient,braccia2025optimal}. A second source of increased difficulty in the matchgate case concerns the number of shots (i.e.,  sample-complexity) required to estimate natural classes of observables as a function of the system size. Indeed, whereas observables of constant spatial locality lead, in the local Clifford case, to sample-complexities that are independent of the system size~\cite{huang2020predicting}, operators of constant \textit{fermionic locality} lead, in the matchgate case, to polynomially growing sample-complexities~\cite{wan2022matchgate,zhao2021fermionic}.
\\

In this work, we thus tackle the question of the optimal sample-complexity of matchgate classical shadows. For  the estimation of a given \textit{Majorana monomial} of degree $2k$ (seeing Section~\ref{sec:preliminaries} for definitions) corresponding to an $n$-mode system, it is known~\cite{wan2022matchgate,zhao2021fermionic,west2026classical} that this sample-complexity scales exactly as $ \sim n^{k}$. For a more general fermionic operator $O$ of degree $2k$, however, the previously known~\cite{west2026classical} sample-complexity bounds were of order $\mco(n^{2k}\|O\|_\infty^2)$. Our   contribution here is to improve this bound to $\mco(n^{k}\|O\|_\infty^2)$, bringing it in line with the optimal scaling of a single Majorana monomial. While this is ``only'' a polynomial improvement, we give some numerical estimates to show that this saved factor   of $n^k$ can be highly beneficial in experimentally relevant regimes (see Figs.~\ref{fig:schematic} and~\ref{fig:scaling}).

\section{Preliminaries} \label{sec:preliminaries}
Let us begin by establishing some notation. We will   denote by   $  \mch_n=(\mathbb C^2)^{\otimes n} $ the Hilbert space of $n$ fermionic modes, and let $\gamma_1,\ldots,\gamma_{2n}$ be  Majorana operators satisfying the canonical anti-commutation relations $\{\gamma_\mu,\gamma_\nu\}=2\delta_{\mu\nu}\id$; 
we will  make the explicit choice~\cite{wan2022matchgate}
\begin{align}\label{eq:majos}
\gamma_1&=XIII\cdots I\;,&\gamma_{2}&=YIII\cdots I\nonumber\\
\gamma_3&=ZXII\cdots I\;,&\gamma_{4}&=ZYII\cdots I\\
&\:\:\vdots&&\:\:\vdots\nonumber\\
\gamma_{2n-1}&=ZZ\cdots ZX\;,&\gamma_{2n}&=ZZ\cdots ZY\;.  \nonumber
\end{align}
For a  set $S=\{s_1<\cdots<s_{k}\}\subseteq[2n]$, we define the Majorana monomial
\begin{equation}
  \Gamma_S:=(-i)^{k(k-1)/2}\gamma_{s_1}\cdots\gamma_{s_{k}}\,,
\end{equation}
where the factors of $i$ are included to render $\Gamma_S$ hermitian. In particular, we write $\Pi:=\Gamma_{[2n]}$ for the fermionic parity operator. We will be particularly interested in the   group of fermionic Gaussian unitaries (the so-called matchgate group~\cite{jozsa2008matchgates}), whose elements are generated by quadratic combinations of the Majoranas:
\begin{equation}
    \mg = \Big\{ \exp\Big( \sum_{1\le \mu<\nu\le 2n} h_{\mu,\nu}\g_\mu \g_\nu\Big)\ |\ h_{\mu,\nu}\in\mbr\Big\}\,.
\end{equation}
Under the standard action of the matchgate group on $\mch$, and its adjoint action on $\mcl:=\eh$, we have the well-known decompositions~\cite{west2026classical}
\begin{equation}\label{eq:opspace_decomp}
    \mch \cong \mch^+ \oplus \mch^-, \quad \mcl \cong \bigoplus_{k=0}^{2n} \mcl_{k}\,,
\end{equation}
where $\mch^\pm$ are the subspaces of $\mch$ spanned by the computational basis strings of even and odd parity, and
\begin{equation}
  \mcl_{k}:=\operatorname{span}_{\mathbb R}\{\Gamma_S:S\subseteq[2n],\ |S|=k\}
\end{equation}
is what we will call the \textit{homogeneous sector} of order $k$. The above spaces are  all irreps of $\mg$, with the exception of the reduction $\mcl_n\cong\mcl_n^+\oplus \mcl_n^-$ into $\pm 1$ eigenspaces of the parity operator $\Pi$. \\

\begin{figure}
    \centering
    \includegraphics[width=.8\columnwidth]{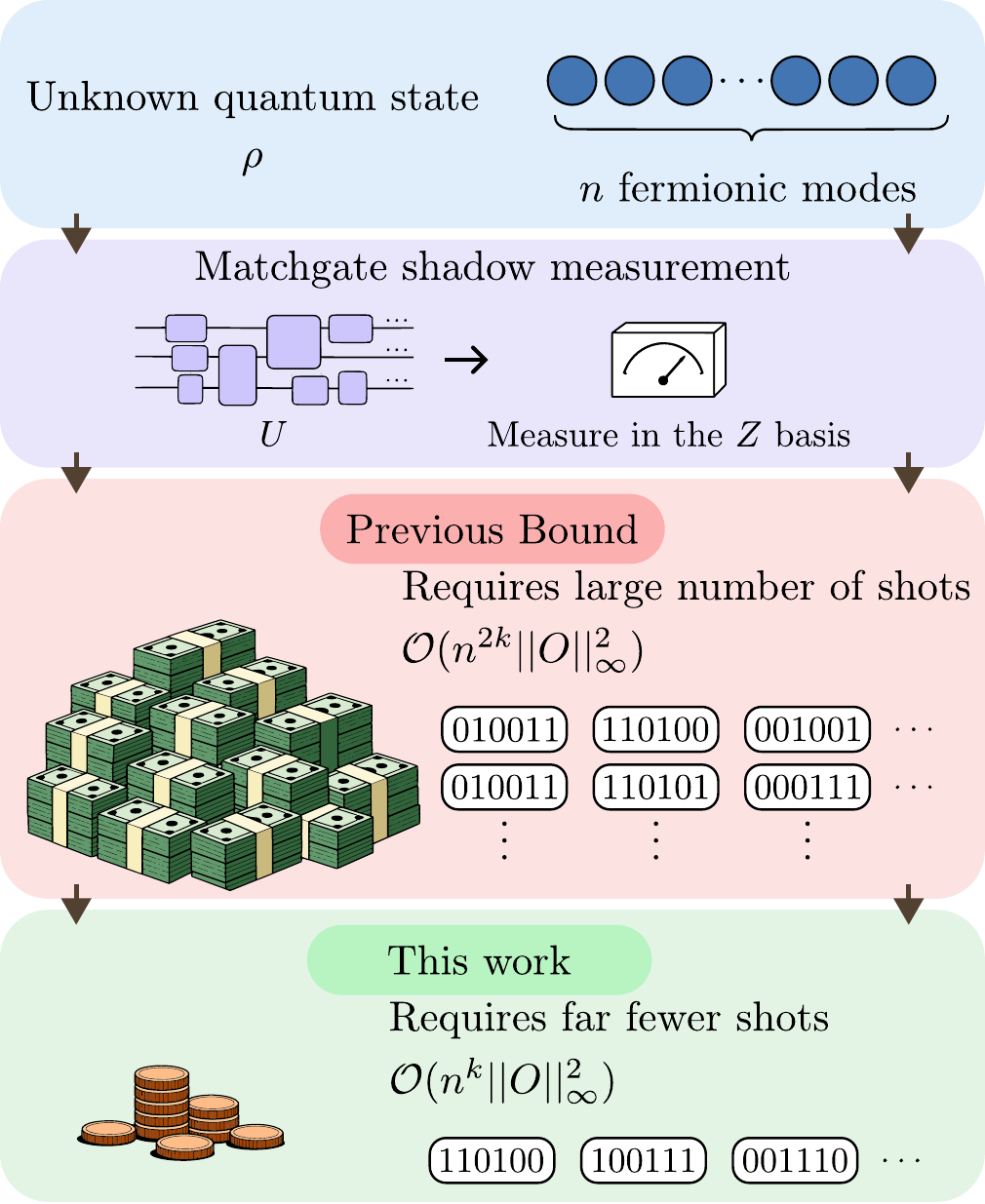}
    \caption{Schematic illustration of the improvement in sample-complexity obtained in this work. An unknown state of $n$ fermionic modes is measured using matchgate classical shadows. Previous bounds require $\mathcal{O}(n^{2k}\|O\|_\infty^2)$ shots to estimate an observable $O$ of Majorana degree $2k$, whereas our results reduce this requirement to $\mathcal{O}(n^k\|O\|_\infty^2)$, yielding a substantial reduction in experimental cost.}
    \label{fig:schematic}
\end{figure}

Next  we recall that,  given a measurement basis $\mcz$ and an ensemble $\mce$ of unitaries, we have an induced \textit{classical shadow protocol}~\cite{huang2020predicting}  for extracting information from an unknown quantum state $\rho$.   In this  protocol one   evolves copies of $\rho$ under the action of unitaries sampled from $\mce$, and measures the resulting state with respect to $\mcz$.  Conditioned  on sampling some unitary $U$ and measuring some outcome $\ket z$, one records the state $U \ad\ketbra{z}{z} U$. This  effects the \textit{measurement channel}\footnote{Representation theoretically, this map is actually very natural; with $\mcd_\mcz$  the dephasing channel wrt $\mcz$, Eq.~\eqref{eq:chan} is simply the (average) of the natural action of   the group of unitaries on $\mcd_\mcz\in {\rm End}({\rm End}(\mch))$; that is, we have $\mcm = \expect_{u\sim G} u \cdot \mcd_\mcz$.}
\begin{equation}\label{eq:chan}
  \mcm(\rho):=\expect_{U\sim\mu_G}\sum_{z\in\mcz}
    \langle z|U\rho U^\dagger|z\rangle
    U^\dagger|z\rangle\!\langle z|U\,,
\end{equation}
where we have taken $\mce$ to form a 3-design over a compact group $G$, with Haar measure  $\mu_G$. In the case of $G$ the matchgate group and $\mcz$ the computational basis (which we will henceforth assume) the measurement channel is diagonal with respect to the decomposition of Eq.~\eqref{eq:opspace_decomp}. Indeed, one has that for any $X\in\mcl_{2k}$~\cite{wan2022matchgate,zhao2021fermionic,west2026classical},
\begin{equation}
  \mcm(X)=a_{n,k}X,\qquad a_{n,k}:= \binom{n}{k} \big/ \binom{2n}{2k} \,,
  \label{eq:visibility}
\end{equation}
and, for $X\in\mcl_{2k+1}$, $\mcm(X)=0$.
Now, from the non-vanishing of $a_{n,k}$, it follows from   general representation theoretic considerations~\cite{west2026classical} that for $O\in\mcl_{2k}$, the one-shot estimator associated with the outcome $(U,z)$ given by  
\begin{equation}
  \hat o=\Tr\!\left[O\mcm^{-1}(U\ad \ketbra{z}{z} U)\right]
  =a_{n,k}^{-1}\Tr(OU\ad \ketbra{z}{z} U)
\end{equation}
is unbiased for $\Tr(O\rho)$.  The number of single-shot estimators $\mathsf{S}$ required to ensure, with probability $1 - \delta$, that the empirical average of each observable in a collection $\{O_i\}_{i=1}^M$ is within an additive error $\varepsilon$ of its true expectation value is given by
by~\cite{huang2020predicting}  
\begin{equation}\label{eq:sc}
\mathsf{S} \in \mco\left(\frac{\log (M/\delta)}{\varepsilon^2}\max_i {\rm Var}\, [\hat{o}_i ]\right)\,.
\end{equation}
 For an observable $O$ of interest, then, one wishes to bound the so-called \textit{shadow norm}~\cite{huang2020predicting}  
\begin{align}\label{eq:sn}
  \|O\|_{{\rm sh}}^2=&\sup_{ \rho}\! \underset{U\sim\mu_G}{\int}\sum_{z\in \mcz} \sbraket{z|U\rho U\ad}{z} \Tr\!\left[O\mcm^{-1}(U\ad\! \ketbra{z}{z}\! U)\right]^2,
\end{align}
which itself bounds the variance $\Var[\hat{o}_i]$ for every state. Naturally, quite a lot of effort has gone into bounding these shadow norms for various classes of observables. For example, when the target observable is a \emph{single} Majorana string of degree $2k$, one finds that the variance is bounded by $a_{n,k}^{-1}$~\cite{wan2022matchgate,zhao2021fermionic,west2026classical}.  More generally, for an observable $O=\sum_{k}O_{2k}\,,$ where $O_{2k}$ is the component of the observable in the subspaces $\mcl_{2k}$ of Eq.~\eqref{eq:opspace_decomp}, one has on general grounds~\cite{west2026classical} that, for any $\rho$,
\begin{equation}
\Var[\hat{o}] \leq \sum_{k} \frac{\|O_{2k}\|^2_2}{a_{n,k}}  
\label{eq:2nb}         
\end{equation}
and additionally
\begin{equation}
\Var[\hat{o}] \leq  \left\|   \sum_{k} \frac{O_{2k} }{a_{n,k}}  \right\|_\infty^2. \label{eq:inb}
\end{equation}
Unfortunately, the bounds of Eqs.~\eqref{eq:2nb} and~\eqref{eq:inb} can be quite loose. For example, one is often working with observables of extensive 2-norm (e.g., Pauli strings), so that the bound of Eq.~\eqref{eq:2nb} is immediately impractical. The infinity norm   bound of Eq.~\eqref{eq:inb}, on the other hand, is penalised by an additional factor of $a_{n,k}$. One would ideally like a situation with the best of both worlds. That is, a bound involving the infinity norm and only one factor of $a_{n,k}$. Our main result is to derive exactly such a bound in the case of matchgate shadows.

\section{Results} \label{sec:results}

Our main result is a theorem that bounds the the variance for an observable that is fully supported on some $\mcl_{2k}$, i.e., which is expressed as a homogeneous combination of Majoranas of degree $2k$. This result will be then generalized for more general operators $O$. 

\begin{restatable}{theorem}{homogeneousbound}\label{thm:homogeneous}
Let $n\geq1$,  and $O\in\mcl_{2k}$ for some  $0\leq k\leq n/2$.  Then the variance of the estimator obtained by matchgate shadows satisfies 
\begin{equation}
  \Var[\hat o] \leq\frac{3}{2a_{n,k}}\|O\|_\infty^2\, .
\end{equation}
Up to the factor of $3/2$, this bound is optimal.
\end{restatable}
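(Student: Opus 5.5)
The plan is to bound the second moment, since $\Var[\hat o]\le \mathbb E[\hat o^2]$, and to write it as a single matchgate-twirled operator. Let $D$ be the computational-basis dephasing channel. Then $\sum_z \langle z|UOU^\dagger|z\rangle^2\,|z\rangle\!\langle z| = D(UOU^\dagger)^2$, so
\begin{equation}
\mathbb E[\hat o^2]=a_{n,k}^{-2}\,\Tr\!\left[\rho\,\Phi(O)\right],\qquad \Phi(O):=\expect_{U\sim\mu_G}U^\dagger D(UOU^\dagger)^2U .
\end{equation}
It then suffices to show $\|\Phi(O)\|_\infty\le \tfrac32 a_{n,k}\|O\|_\infty^2$. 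This is exactly the statement that one factor of $a_{n,k}$ is recovered, compared with the naive bound $|f_z|\le\|O\|_\infty$ which gives Eq.~\eqref{eq:inb}-type scaling.

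The first step is to make $D(UOU^\dagger)$ explicit. Since $UOU^\dagger\in\mcl_{2k}$, its diagonal part lies in the span of the $\binom nk$ products $Z_T=\prod_{j\in T}(-i\gamma_{2j-1}\gamma_{2j})$ with $|T|=k$. Hence $D(UOU^\dagger)=\sum_{|T|=k}c_T(U)Z_T$, and
\begin{equation}
D(UOU^\dagger)^2=\sum_{T,T'}c_Tc_{T'}Z_{T\triangle T'}.
\end{equation}
I then split $\Phi(O)$ by the homogeneous sector of $Z_{T\triangle T'}$ (degree $2|T\triangle T'|$). Twirling maps each such piece into $\mcl_{2j}$.

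The identity component ($T=T'$) is the easy part. Its coefficient is $\expect_U\sum_T c_T(U)^2$, which by Schur orthogonality (the same computation giving Eq.~\eqref{eq:visibility}) equals $a_{n,k}\,2^{-n}\Tr(O^2)\le a_{n,k}\|O\|_\infty^2$. So this piece contributes at most $a_{n,k}^{-1}\|O\|_\infty^2$ to the variance, which is the leading term.

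The main obstacle is the off-diagonal terms $T\neq T'$. I will not use the Kadison-type inequality $D(X)^2\le D(X^2)$: it loses the factor $a_{n,k}$ on the identity part of $O^2$. Instead I would treat $\Phi$ as an $\mg$-equivariant quadratic map from $\mcl_{2k}$ to $\bigoplus_j\mcl_{2j}$. I would use the Gaussian/Pfaffian structure: $U^\dagger|z\rangle\!\langle z|U$ is a Gaussian state with covariance matrix $QM_zQ^T$, $Q\in O(2n)$. This gives $f_z=\sum_S c_S\,\mathrm{Pf}[(QM_zQ^T)_S]$, and the twirl becomes an $O(2n)$ average of products of Pfaffians.

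For each $j\ge1$ I would show the $\mcl_{2j}$ component is bounded in operator norm by a term of relative size $\lesssim a_{n,k}\|O\|_\infty^2\cdot(\text{something summing to }1/2)$. Concretely, I expect the overlap-size-$m$ contributions $|T\cap T'|=k-m$ to carry ratios like $a_{n,k+m}/a_{n,k}$ or combinatorial weights $\binom km\binom{n-k}{m}/\binom{2n}{2m}$. These should form a geometric-type series bounded by $\tfrac12$, and I expect this to be where the constant $3/2$ arises. The hardest part is controlling these pieces with $\|O\|_\infty$ rather than $\|O\|_2$. I would first try to rewrite each component as $\expect_U U^\dagger[D(UOU^\dagger)\,P_m\,D(UOU^\dagger)]U$ for suitable positive diagonal maps $P_m$, and then use $\|D(X)\|_\infty\le\|X\|_\infty$.

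Finally, optimality up to $3/2$ comes from taking $O=\Gamma_S$ a single degree-$2k$ monomial, with $\|O\|_\infty=1$. The known exact variance $a_{n,k}^{-1}-\Tr(\Gamma_S\rho)^2$ saturates $a_{n,k}^{-1}\|O\|_\infty^2$ up to the subtracted $O(1)$ term, for instance for any $\rho$ with $\Tr(\Gamma_S\rho)=0$.
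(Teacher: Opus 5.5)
\textbf{There is a genuine gap: the step that carries the whole theorem is missing.}

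Your setup is correct. The identity $\Var[\hat o]\leq a_{n,k}^{-2}\Tr[\rho\,\Phi(O)]$ with $\Phi(O)=\mathbb{E}_U U^\dagger D(UOU^\dagger)^2U$ is exactly $a_{n,k}^2B_O$. Your identity-sector value $a_{n,k}2^{-n}\Tr(O^2)$ is the $r=0$ term of the paper's formula, and your optimality argument via a single monomial is the paper's.

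The missing step is bounding the non-identity sectors by $\|O\|_\infty^2$. The paper shows these sectors are $a_{n,k}q_r\mathcal{P}_{4r}(O^2)$ with $q_r=\binom kr\binom{n-k}r/[\binom{2k}{2r}\binom{2(n-k)}{2r}]$. You leave this bound as an expectation, and the device you propose cannot supply it:
\begin{itemize}
\item For $r\geq1$ these components are traceless. So none of them can be the twirl of a positive sandwich $D(X)P_mD(X)$ with $P_m\geq0$, which would be positive semidefinite and hence zero.
\item For diagonal $P_m$ one simply has $D(X)P_mD(X)=P_mD(X)^2$. This reweights the full square rather than isolating the pairs with $|T\cap T'|=k-m$.
\item More fundamentally, $\mathcal{P}_{4r}$ is not contractive in operator norm. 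A sector-by-sector estimate therefore needs $q_r\|\mathcal{P}_{4r}(O^2)\|_\infty$ to be small uniformly in $n$ and $k$.
\end{itemize}
The natural generic tool is to read $\mathcal{P}_{4r}(O^2)$ off as the $t^r$ coefficient of $\sum_r t^r\mathcal{P}_{4r}(O^2)=\sum_j\nu_{t^{1/4}}(j)\mathcal{R}_j(O^2)$, which has norm at most $\|O\|_\infty^2$ for $0\leq t\leq1$. Extracting a coefficient from a polynomial bounded on $[0,1]$ costs up to the shifted-Chebyshev coefficients, which are exponential in $k$. That is harmless for fixed $k$ and large $n$, but near $k=n/2$ it is not offset by the decay of the $q_r$. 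There $q_r=[\binom kr/\binom{2k}{2r}]^2$ is only of order $4^{-k}$ for $r\approx k/2$, while the corresponding Chebyshev coefficient grows faster than $4^k$. The theorem, by contrast, is uniform in $0\leq k\leq n/2$. Your geometric series summing to $1/2$, and the claim that it produces the $3/2$, are therefore unsupported.

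\textbf{How the paper does it.} It never bounds sectors separately.
\begin{enumerate}
\item Clifford matchgates form a matchgate 3-design, so each shot measures in the eigenbasis of a uniformly random perfect matching of $[2n]$. The joint survival probability of $\Gamma_S,\Gamma_T$ then depends only on $|S\cap T|$ and vanishes for odd overlap. Odd-overlap pairs also anticommute, so they cancel in $O^2$. This gives the closed form $\Phi(O)=a_{n,k}\sum_{r=0}^k q_r\mathcal{P}_{4r}(O^2)$. Thus $\Phi(O)$ depends on $O$ only through the positive operator $O^2$, a structural fact your Pfaffian/$O(2n)$ computation would have to rediscover.
\item The whole multiplier $r\mapsto q_r$, identity sector included, is written as a Krawtchouk transform $q_r=\sum_j w_j\kappa^{(2n)}_{4r}(j)$. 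Hence $\sum_r q_r\mathcal{P}_{4r}(O^2)=\sum_j w_j\mathcal{R}_j(O^2)$, where each $\mathcal{R}_j$ is a unital completely positive average of Majorana sign flips. Then $0\leq O^2\leq\|O\|_\infty^2\id$ yields $\|\Phi(O)\|_\infty\leq a_{n,k}\|O\|_\infty^2\sum_{w_j>0}w_j$.
\item Constructing $w$ is the real work. One writes $q_r=\mathbb{E}[T^r]$ for $T$ a product of two independent beta-prime variables. On $\{T\leq1\}$ one averages the binomial kernels $\nu_{T^{1/4}}$, whose transform is $\tau^s$. On $\{T>1\}$ one multiplies by $(-1)^j$, which exchanges degrees $s$ and $2n-s$. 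One then uses Lemma~\ref{lem:2}, which supplies a positive kernel with transform $y^{s-2d}$ on $s\geq2d$ and mass at most $B_{N,d}$ (here $N=2n$, $d=n-2k$). That lemma is proved via Gauss quadrature, the Hermite remainder, and a walk-counting comparison.
\end{enumerate}
This construction gives $\|w\|_1\leq2$. Together with $\sum_jw_j=q_0=1$, the positive part of $w$ is at most $3/2$. That, not a geometric tail over sectors, is the origin of the constant.
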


Here we present a brief sketch of the proof. The full derivation can be found in the appendices.

\begin{proof}[Proof of Theorem~\ref{thm:homogeneous} (sketch)]
Our first step is to exploit the fact that the Clifford matchgates form a matchgate 3-design~\cite{wan2022matchgate,gargiulo2026pauli}, so that we may replace the matchgate Haar average of Eq.~\eqref{eq:sn} with a finite average over the  Clifford matchgates. The usefulness of this is, as we shall see, its reduction of a considerable part of the problem to combinatorics.   \\ 

To see this, note that if we were to omit the action of the random matchgate Clifford from the shadow protocol, we would simply be measuring the unknown state in the computational basis; that is, the simultaneous eigenbasis of the $Z_j=-i\g_{2j-1}\g_{2j}$. As Clifford matchgates act on Majoranas as signed permutations, in a given shot we effectively measure in the simultaneous eigenbasis of the commuting Majorana monomials associated with a uniformly random perfect matching $M$ of the $2n$ Majorana indices. For example, take $n=2$. The computational basis is the simultaneous eigenbasis of $Z_1\propto \g_1\g_2$ and $Z_2\propto \g_3\g_4$, and thus corresponds to the perfect matching $\{\{1,2\},\{3,4\}\}$. Under the action of the matchgate Clifford which (say) fixes $\g_1$ and $\g_4$, and permutes $\g_2$ and $\g_3$, we would have the perfect matching $\{\{1,3\},\{2,4\}\}$, which corresponds to measuring in the simultaneous eigenbasis of $YX\propto \g_1\g_3$ and $XY\propto \g_2\g_4$. \\

Now, let $\mcd_M$ be the dephasing channel with respect to  the common eigenbasis of the $n$ commuting
Majorana monomials associated with the perfect matching $M$ in the above-prescribed fashion. We then have $\expect_\rho[\hat o^{2}]=\Tr(\rho B_O)$, where

\begin{equation}
  B_O=a_{n,k}^{-2}\mbe_M[\mcd_M(O)^2]\,.
  \label{eq:ssm}
\end{equation}
In particular, $B_O\geq0$ and
$\|O\|_{\sh}^2=\|B_O\|_\infty$.  
Let us expand $O=\sum_{|S|=2k}c_S\Gamma_S$, so that  
\begin{equation}
    B_O = a_{n,k}^{-2}\mbe_M\, \sum_{S,T}c_Sc_T \mcd_M(\Gamma_S)\mcd_M(\Gamma_T)\,.
\end{equation}
A given monomial $\Gamma_S$ is diagonal in the basis  specified by $M$ (and so survives $\mcd_M$) precisely when $i\in S\Leftrightarrow M(i)\in S$, where $M(i)$ is the pair of the index $i$ specified by   $M$. Otherwise  it is annihilated. This yields 
\begin{equation}
    B_O = a_{n,k}^{-2}  \sum_{S,T}c_Sc_T p_{S, T} \Gamma_S \Gamma_T\,,
\end{equation}
with $p_{S, T}$ the probability that both $\Gamma_S$ and $\Gamma_T$ survive under a random perfect matching.
Some counting shows that if
$|S\cap T|=2j$, this probability   is
\begin{equation}\label{eq:prob}
  p_j=\frac{(2j-1)!!\,(2(k-j)-1)!!^2(2(n-2k+j)-1)!!}{(2n-1)!!}\,;
\end{equation}
if their intersection is odd, then the probability of their joint survival is zero (see Appendix~\ref{sec:matching}). At this point one can show the surprising result that if $|S\cap T|$ is odd, then $\Gamma_S$ and $\Gamma_T$ anticommute, and therefore cancel in $O^2$. Hence,we can equivalently write\footnote{Note  that if $O^2=\id$ (as, for example, in the case of a single Majorana string) then Eq.~\eqref{eq:sk} immediately yields  $\|O\|^2_{\rm sh}=a_{n,k}^{-1}$}
\begin{equation}
  B_O=a_{n,k}^{-2}\sum_{r=0}^{k}p_{k-r}\mcp_{4r}(O^2)=a_{n,k}^{-1}\sum_{r=0}^{k}q_{r}\mcp_{4r}(O^2)\,, \label{eq:sk}
\end{equation}
where $\mcp_{4r}(O^2)$ denotes the degree-$4r$ component of $O^2$ (note that for an intersection of size $2(k-r)$ for some $r$, the product $\Gamma_S \Gamma_T$ has degree $4r$), and we have  introduced $q_r=p_{k-r}/a_{n,k}$. \\

So, let us then attempt to control the operator norm of $\sum_rq_r\mcp_{4r}(O^2)$. First, let $V_\mu=i\Pi\g_\mu$ be the (Hermitian) unitary
whose conjugation action flips the sign of $\g_\mu$ and fixes all the other Majoranas (recall that $\Pi=\Gamma_{[2n]}$ is the parity operator). 
Then, for $A\subseteq[2n]$, let $\mathcal U_A$ be the composition of the conjugation channels of the Majoranas whose indices lie within $A$, and for $ 0\leq j\leq N$ define the \textit{radial shell channel}

\begin{equation}
  \mathcal R_j:=\binom {2n}{j}^{-1}\sum_{\substack{A\subseteq[2n]\\|A|=j}}\mathcal U_A\,.
\end{equation}
Every $\mathcal R_j$ is unital and completely positive, and has eigenvalues given by normalized \textit{Krawtchouk polynomials} $\kappa_s^{(2n)}(j)$, with corresponding  eigenspaces given by the span of the degree $s$ Majoranas  (see Appendix~\ref{sec:matching}). The point of introducing these objects is that we can find (see Appendix~\ref{sec:kt}) a function $w:[2n]\to\mbr$ satisfying $\sum_{j\,:\,w_j>0} w_j\leq3/2$, and, critically, 
\begin{align}
  \sum_{j=0}^{2n}w_j\kappa_{4r}^{(2n)}(j)&=q_r\,.
\end{align}
That is, the \textit{Krawtchouk transform} of $w$ evaluated at the values $4r$ gives exactly the sequence $q_r$ of Eq.~\eqref{eq:sk}. Now, letting $R=\|O\|_\infty^2$, we have $0\leq O^2\leq R\id$, with each $\mcr_j$ preserving this ordering (by their unitality and positivity); in particular,
we have
\begin{align*}
    0&\leq a_{n,k} B_O\\ 
    &=\sum_{j=0}^{2n}w_j \mcr_{j}(O^2) \leq \sum_{j\,:\, w_j\geq 0} w_j \mcr_{j}(O^2) \leq \frac{3\|O\|_\infty^2}{2}\,.
\end{align*}
Putting everything together we therefore have
\begin{align*}
    \|O\|_{\sh}^2&= \|B_O\|_\infty\\
    &=a_{n,k}^{-1}\Big\|\sum_{r=0}^{k}q_{r}\mcp_{4r}(O^2)\Big\|_\infty\\
    &=a_{n,k}^{-1}\Big\|\sum_{r=0}^{k}\Big(\sum_{j=0}^{2n}w_j\kappa_{4r}^{(2n)}(j)\Big)\mcp_{4r}(O^2)\Big\|_\infty\\
    &= a_{n,k}^{-1}\Big\|\sum_{j=0}^{2n} w_j \mcr_{j}(O^2)\Big\|_\infty\\
    &\leq \frac{3}{2a_{n,k}}\|O\|_\infty^2\,.
\end{align*}
As we shall see in the appendices, the most difficult bit is actually constructing a function $w$ with the stated conditions; modulo that (and the omitted combinatorics from above) the proof of our bound is complete. Finally, its optimality may be seen from the fact that (up to the factor of 3/2) it is saturated when the target observable is a single Majorana string of degree $2k$~\cite{wan2022matchgate,zhao2021fermionic,west2026classical}
\end{proof}

As previously noted, Theorem~\ref{thm:homogeneous} covers the case of a  homogeneous observable $O\in\mcl_{2k}$ for some $k$. In the more general case where $O$ lives in multiple subspaces, we have
\begin{restatable}{crl}{crlmult}\label{crl:mult}
   If $O=\sum_{k=1}^K O_{2k}$, with $O_{2k}\in \mcl_{2k}$ for all $k$, then
   \begin{equation}\label{eq:mult1}
       \Var(\hat o) \leq \frac32 \left(\sum_{k=1}^K\frac{\|O_{2k}\|_\infty}{\sqrt{a_{n,k}}}\right)^2\,;
   \end{equation}
   if $K\leq n/2$ we further have
   \begin{equation}\label{eq:mult2}
       \Var(\hat o) \leq \frac{3K}{2 a_{n,K}}\sum_{k=1}^K \|O_{2k}\|_\infty ^2\,.
   \end{equation}
\end{restatable}
Corollary~\ref{crl:mult}  follows quickly from Theorem~\ref{thm:homogeneous} by the triangle inequality (see Appendix~\ref{sec:kt}). 

\section{Application to the Hubbard model} \label{sec:hubbard}

To put some numbers on the sort of improvements in sample-complexity that the  results of Section~\ref{sec:results} can imply relative to Eq.~\eqref{eq:inb}, we consider the task of estimating the energy of a state in the spin-$1/2$ \textit{Hubbard model}~\cite{hubbard1963electron,lieb1989two}. Despite its simplicity as a minimal model of correlated electron dynamics, the Hubbard model is something of a standard benchmark in quantum computing~\cite{cade2020strategies,stanisic2022observing}. 
For an (open) chain of $L$ sites, we recall   that Hamiltonian of the Hubbard model may be taken to read
\small
\begin{align*}
H\!=\!-t\sum_{i=1}^{L-1}\!\sum_{\sigma=\uparrow,\downarrow}
\!\big(c_{i\sigma}^{\dagger}c_{i+1,\sigma}
\!+\!{\rm h.c.}\big)\!+\!V\sum_{i=1}^{L}\big(n_{i\uparrow}\,-\tfrac12\big)\big(n_{i\downarrow}-\!\tfrac12\big)    ,
\end{align*}
\normalsize
where $c_{i\sigma}^{\dagger}$ and $c_{i\sigma}$ are the fermionic creation and annihilation operators, respectively, for a fermion with spin $\sigma$ on site $i$, and $n_{i\sigma}=c_{i\sigma}^{\dagger}c_{i\sigma}$ is the corresponding number operator. Then, $t$ and $V$, respectively, denote the hoping and on-site interaction strengths.  Our goal is therefore to use classical shadows to estimate the expectation value $\langle H\rangle_\rho=\Tr(H\rho)$ for some unknown state $\rho$ prepared on a quantum computer.

\begin{figure}
    \centering
    \includegraphics[width=1\columnwidth]{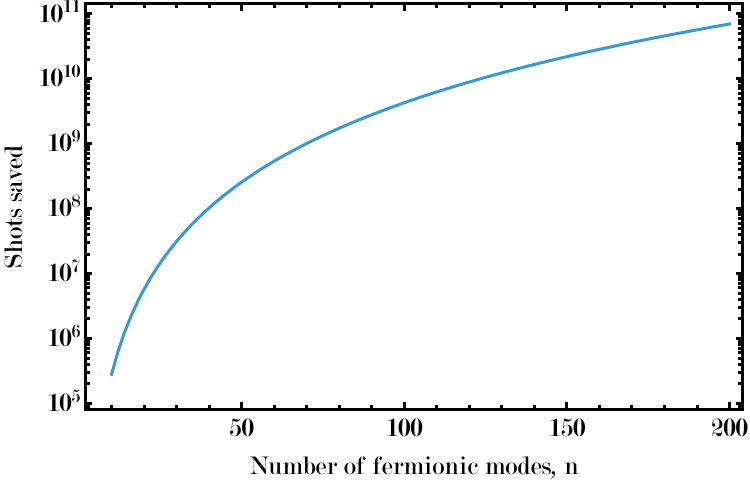}
    \caption{Number of shots saved by the improved matchgate-shadow sample-complexity bound as a function of the number of fermionic modes $n$, for an open Hubbard chain with $t=1$, $V=4$, and target additive precision $\varepsilon=0.1$.}
    \label{fig:scaling}
\end{figure}

To see how this observable fits into the framework described above, let us introduce, for a spin orbital $p=(i,\sigma)$, the Majoranas $x_p=c_p+c_p^\dagger$ and $y_p=-i(c_p-c_p^\dagger)$. Then it is easy to see that
\begin{equation}
n_p-\frac12=\frac{i}{2}x_py_p\,,    
\end{equation}
 and, for $p\neq q$,
\begin{equation}
   c_p^\dagger c_q+c_q^\dagger c_p=\frac{i}{2}(x_py_q-y_px_q)\,. 
\end{equation}
Consequently, we can write $H=H_2+H_4$ with 
\begin{equation}
H_2=-\frac{it}{2}\sum_{i,\sigma}\left(x_{i\sigma}y_{i+1,\sigma}-y_{i\sigma}x_{i+1,\sigma}\right)\in\mathcal L_2
\end{equation}
and
\begin{equation}
H_4=-\frac V4\sum_ix_{i\uparrow}y_{i\uparrow}x_{i\downarrow}y_{i\downarrow}\in\mathcal L_4\,.
\end{equation}
Now, for the energy $h=H/n=h_2+h_4$ per mode, open boundaries give
\begin{align}
\|h_2\|_\infty&=\frac{2|t|}{L}\sum_{j=1}^{\lfloor L/2\rfloor}\cos\!\left(\frac{j\pi}{L+1}\right)\,,\\
\|h_4\|_\infty&=\frac{|V|}{8}\,.
\end{align}

For example, for $n= 100,\ L= 50,\ t= 1$, and $V = 4$, we have $\|h_2\|_\infty\approx 0.63$ and $\|h_4\|_\infty = 1/2 $. 
Using also that $a_{100,1}^{-1}=199$ and $a_{100,2}^{-1}=39203/3$, Eq.~\eqref{eq:mult1} leads to $\Var[\hat h] \leq 6542$. On the other hand, the bound offered by Eq.~\eqref{eq:inb} is at best $(39203/6)^2\approx 4.3\times 10^7$ (this follows from noticing that the vacuum state is an eigenstate of the rescaled Hamiltonian $a_{n,1}^{-1}h_2+a_{n,2}^{-1}h_4$ of eigenvalue $a_{n,2}^{-1}/2$). So, our bound improved by a factor of roughly $10^4$. For a target additive precision of $\varepsilon=0.1$, the inverse  quadratic scaling of Eq.~\eqref{eq:sc} then gives the additional factor of 100 needed to reproduce the numbers quoted in the abstract. More generally, Fig.~\ref{fig:scaling} shows the scaling of how many shots are saved as a function of $n$.

\section{Discussion}
We have  derived asymptotically optimal upper bounds on the  sample-complexity of the learnability of observables of homogeneous fermionic-degree under matchgate shadows. In conjunction with its obvious  theoretical interest,  this result is of practical relevance in the regime of hundreds of qubits already being probed by experiments. Perhaps somewhat surprisingly, our proof of this result is highly technical, and goes significantly beyond the tools needed to prove the analogous bound in the case of a single Majorana monomial~\cite{wan2022matchgate,zhao2021fermionic,west2026classical}. It would be very interesting to find a simpler proof, not least because it is highly unclear how to generalise our current approach for strengthening the generic bounds of Eqs.~\eqref{eq:2nb} and~\eqref{eq:inb} to other classical shadow protocols of interest. Additionally, such a proof might succeeded in removing the factor of 3/2 present in our bounds, which we suspect is an artefact of our proof strategy. Certainly this factor is known to be unnecessary in the case of an individual Majorana monomial~\cite{wan2022matchgate,zhao2021fermionic,west2026classical}, and more generally  numerical experiments points to the optimal constant being one. \\

Finally, the second source of difficulty when experimentally implementing matchgate shadows on real hardware is the circuit depth required to implement matchgate 3-designs. It has been shown that, on one-dimensional nearest-neighbour architectures, sublinear-depth ensembles of matchgates cannot form matchgate 3-designs~\cite{west2025no,grevink2025will}; more recently it has been shown that on such architectures \textit{no} sublinear-depth ensembles can form matchgate 3-designs~\cite{west2026ambient}. On the other hand, it has been shown yet   more recently that on all-to-all connected architectures, matchgate 3-designs can form in the optimal logarithmic depth, from ensembles whose elements lie in the intersection of the matchgate and Clifford groups~\cite{west2026strong}.
Combined with our results here achieving the optimal sample-complexities, this dramatically increases the prospects of performing useful fermionic shadow tomography on near-term hardware, should that hardware support all-to-all interactions. As this connectivity can be achieved naturally in recently developed trapped ion~\cite{moses2023race,ransford2025helios} and neutral atom~\cite{bluvstein2022quantum,bluvstein2024logical} processors, such systems may allow for experimental implementations of  useful matchgate shadows.

\medskip
\section{Acknowledgments}
The authors acknowledge support by the Laboratory Directed Research and Development (LDRD) program of Los Alamos National Laboratory (LANL) under project number 20260043DR, and by LANL’s ASC Beyond Moore’s Law project. This work was also supported by the Quantum Science Center (QSC), a National Quantum Information Science Research Center of the U.S. Department of Energy (DOE). L.C. acknowledges support from the National Science Foundation Graduate Research Fellowship under Grant No. 2140743. Any opinions, findings, and conclusions or recommendations expressed in this material are those of the author(s) and do not necessarily reflect the views of the National Science Foundation. The key technical ideas of the proofs of Theorem~\ref{thm:homogeneous}  are  due to GPT-5.6. Its arguments were verified by the human authors, and  substantially rewritten in order to meet our standards of clarity.

\bibliography{quantum}

\clearpage
\newpage

\onecolumngrid
\appendix

\makeatletter
\renewcommand*{\theHequation}{\theHsection.\arabic{equation}}
\makeatother



\section{Some combinatorics}\label{sec:matching}
In this appendix we  derive  the elementary combinatorial formulae that characterise the relevant aspects of the behaviour of the second and third moments of the uniform matchgate distribution. The critical fact underlying all of this is that the matchgate Cliffords constitute a matchgate 3-design~\cite{wan2022matchgate,heyraud2024unified}, so that we may work with random matchgate Cliffords instead of Haar random matchgates. Now, a Clifford matchgate $U$ acts on Majoranas as a signed permutation, $U\g_\mu U\ad=\epsilon_\mu\gamma_{\pi(\mu)}$, for some $\epsilon_\mu\in\{+1,-1\}$ and $ \pi\in S_{2n}$. As measurements in the computational basis   are (under our convention Eq.~\eqref{eq:majos}) just  the simultaneous measurement of the $Z_j=-i\gamma_{2j-1}\gamma_{2j},\  j=1,\ldots,n$, a  computational measurement after rotating by the random Clifford matchgate $U$ is simply the simultaneous measurement of the operators $\widetilde{Z}_j=-i\gamma_{\pi(2j-1)}\gamma_{\pi(2j)},\  j=1,\ldots,n$.  Equivalently, for a \textit{perfect matching} $M(\pi)=\bigl\{\{\pi(1),\pi(2)\},\ldots,\{\pi(2n-1),\pi(2n)\}\bigr\}$ of $[2n]$, we measure the (rank one projectors)
\begin{equation}
  \Pi_{M,z}
  :=2^{-n}\prod_{e\in M}(\id+z_eZ_e),
  \qquad z\in\{\pm1\}^{M}\,;
\end{equation}
the corresponding dephasing channel is
\begin{equation}
  \mcd_M(X):=\sum_z\Tr(\Pi_{M,z}X)\Pi_{M,z}.
\end{equation}
As explained in the main text, for Majoranas $\Gamma_S,\Gamma_T\in \mcl_{2k}$,  we are interested in understanding the corresponding \textit{survival probabilities}. That is, for a random perfect matching $M$: what is the probability that  $\mcd_M(\Gamma_S)\neq 0$, and the probability that both  $\mcd_M(\Gamma_S)\neq 0 $ and $\mcd_M(\Gamma_T)\neq 0$?
As also mentioned in the main text, the key observation is that a given monomial $\Gamma_S$ is diagonal in the basis  specified by $M$ (and so survives $\mcd_M$) precisely when it is formed from a product of Majoranas who appear if and only if their matching pair (as specified by $M$) also appears, and is otherwise exactly annihilated. Said another way, $\Gamma_S\in\mcl_{2k}$ is retained   exactly when $S$ is a union of $k$ edges of $M$;  it follows that if $|S\cap T|$ is odd, no matching retains both supports. On the other hand,  if  $|S\cap T|=2j$ for some $j$, then their joint survival requires the matching to pair vertices  in each of $S\cap T,\ S\setminus T,\ T\setminus S$, and $[2n]\setminus(S\cup T)$. As there are $(2n-1)!!$ perfect matchings of $[2n]$ for each $n$, the joint survival  is therefore
\begin{equation}
  p_j=\frac{(2j-1)!!\,(2(k-j)-1)!!^2
    (2(n-2k+j)-1)!!}{(2n-1)!!},
  \qquad 0\leq j\leq k\,,
\end{equation}
where we use the convention $(-1)!!:=1$.  Next, recall from the main text the definition $V_\mu=i\Pi\gamma_\mu$, which when acting by conjugation flips the sign of
$\gamma_\mu$  and fixes every other Majorana, and, for
$A\subseteq[2n]$, let $\mathcal U_A$ be the composition of these
commuting sign-flip conjugations. We recall also the \textit{radial shell channel}
\begin{equation}
  \mcr_j:=\binom {2n}{j}^{-1}\sum_{\substack{A\subseteq[2n]\\|A|=j}}\mathcal U_A,\qquad 0\leq j\leq 2n.
\end{equation}
Every $\mcr_j$ is manifestly unital and completely positive.  We adopt the  convention that the Krawtchouk polynomials (not normalised and normalised, respectively) take the form
\begin{align}
  K_s^{(2n)}(j)&=\sum_{\ell=0}^s(-1)^\ell\binom j\ell\binom{2n-j}{s-\ell},\qquad\kappa_s^{(2n)}(j):=\frac{K_s^{(2n)}(j)}{\binom {2n}s}\,.
\end{align}
As the sign acquired by some $\Gamma_S$ under $\mathcal U_A$ is $(-1)^{|A\cap S|}$,  averaging over all $j$-subsets and using  the readily verified identity $(-1)^j\kappa_s^{(2n)}(j)=\kappa_{2n-s}^{(2n)}(j)$ gives
\begin{equation}
  \mcr_j(\Gamma_S)=\kappa_{|S|}^{(2n)}(j)\Gamma_S\,.
\end{equation}

\section{Proof of main results}~\label{sec:kt}

\noindent
In this appendix we fill in the remaining details of the proof of Theorem~\ref{thm:homogeneous}, which we recall to state
\homogeneousbound*
\begin{proof}
The proof sketch of Theorem~\ref{thm:homogeneous} of the main text omitted two details: the derivation of Eq.~\eqref{eq:prob}, and the construction of a function $w:[N]\to\mbr$ (where for notational convenience we introduce $N=2n$) satisfying both $\sum_{j\,:\,w_j>0} w_j\leq3/2$ and 
\begin{equation}
 \widehat w(4r)= \sum_{j=0}^{N}w_j\kappa_{4r}^{(N)}(j) = \frac{\binom{k}{r}\binom{n-k}{r}}{\binom{2k}{2r}\binom{2(n-k)}{2r}}\,.\label{eq:qr}
\end{equation}
That is, the  Krawtchouk transform of $w$ evaluated at the values $4r$ gives exactly the sequence $q_r$ of Eq.~\eqref{eq:sk}. The former omission was remedied in Appendix~\ref{sec:matching}; for the latter, we will need a few identities and lemmas. First, recall that the Krawtchouk generating function and orthogonality relations are given respectively by
\begin{align}
  (1+y)^{N-j}(1-y)^j&=\sum_{s=0}^NK_s^{(N)}(j)y^s,\label{eq:kgen}\\
  \sum_{j=0}^N\pi_N(j)K_s^{(N)}(j)K_t^{(N)}(j)&=\binom Ns\delta_{s,t}\,,\label{eq:ortho}
\end{align}
where $  \pi_N(j):=2^{-N}\binom Nj$ is the measure with respect to which the Krawtchouk polynomials are orthogonal. We will sometimes find it convenient to use the alternate normalisations   $\kappa_s^{(N)}:=K_s^{(N)}/\binom{N}{s}$, and   $\phi_s(j) := K_s^{(N)}(j)/\sqrt{\binom Ns}$. The Krawtchouk polynomials satisfy a useful recurrence relation; with respect, say, to the $\phi_s$, it reads 
\begin{equation}
  (N-2j)\phi_s(j)=a_{s-1}\phi_{s-1}(j)+a_s\phi_{s+1}(j)\,,
  \label{eq:threerec}
\end{equation}
where $a_s = \sqrt{(s+1)(N-s)}$. 
Eq.~\eqref{eq:threerec} follows readily from the generating function Eq.~\eqref{eq:kgen}; it for example also follows that $K_s^{(N)}(j)=[z^s](1+z)^{N-j}(1-z)^j$, where $[z^s]f(z)$ denotes ``the coefficient of $z^s$ in the polynomial $f(z)$.'' \\

\noindent
Next, let us introduce the function
\begin{equation}\label{eq:ntau}
  \nu_\tau(j):=\binom Nj
  \left(\frac{1+\tau}{2}\right)^{N-j}
  \left(\frac{1-\tau}{2}\right)^j.
\end{equation}
For our purposes, the useful thing about $\nu_\tau$ is that it  has a very nice Krawtchouk transform:
\begin{align}
\widehat{\nu_\tau}(s)&=\frac{1}{\binom Ns}\sum_{j=0}^N\binom Nj \left(\frac{1+\tau}{2}\right)^{N-j}\left(\frac{1-\tau}{2}\right)^j K_s^{(N)}(j)\\
&=\frac{1}{\binom Ns}[z^s]
\sum_{j=0}^N\binom Nj \left(\left(\frac{1+\tau}{2}\right)(1+z)\right)^{N-j} \left(\left(\frac{1-\tau}{2}\right)(1-z)\right)^j\\
&=\frac{1}{\binom Ns}[z^s]\left(\frac{(1+\tau)(1+z)+(1-\tau)(1-z)}{2}\right)^N\\
&=\frac{1}{\binom Ns}[z^s](1+\tau z)^N\\
&=\tau^s.\label{eq:ntt}
\end{align}
We will put $\nu_\tau$ to use in a moment. First, though, we notice  that $q_r$ has a useful probabilistic interpretation. Indeed, for some positive integer $p$, consider a random variable $T_p$ distributed according to  the \textit{beta-prime distribution}~\cite{cordeiro2012mcdonald}, which for $t>0$ has density
\begin{equation}\label{eq:bpd}
  f_p(t)=\frac{t^{-1/2}(1+t)^{-p-1}}{\mathrm B(1/2,p+1/2)}\,,
\end{equation}
where $B$ denotes the beta function, and, critically, for $0\leq r\leq p$, moments~\cite{cordeiro2012mcdonald}
\begin{align}
  \mbe[T_p^r]=\frac{\Gamma(r+1/2)\Gamma(p-r+1/2)}
  {\Gamma(1/2)\Gamma(p+1/2)}=\frac{\binom pr}{\binom{2p}{2r}}\,
  .
\end{align}
Taking   $T_k$ and $T_m$ (with $m=n-k$) to be independent such variables, then, we see from Eq.~\eqref{eq:qr} that if $T=T_kT_m$ we have $q_r=\mbe[T^r]$ whenever $  0\leq r\leq k$. At this point, we are nearly done; indeed consider taking, with $T$ so-distributed, $w_j \underset{?}{=} \mbe_T[\nu_{T^{1/4}}(j)]$. By the discussion of the last few paragraphs, we would then by linearity have
\begin{equation}
    \widehat{w}(4r) = \mbe_T[\widehat{\nu_{T^{1/4}}}(4r)]= \mbe_T[ (T^{1/4})^{4r}]=\mbe_T[ T^{r}] = q_r\,.
\end{equation}
The difficulty is that in order to prove Theorem~\ref{thm:homogeneous} we also need tight control over $\|w\|_{\rm TV}$, which we will not get from this choice of $w$. In fact, under our current ``definition'', $w$ is not even well-defined: it follows from Eq.~\eqref{eq:bpd} that $\mbe[T^r_p]$ diverges for $r>p+1/2$, but for $T\to\infty$ we have from Eq.~\eqref{eq:ntau} that $\nu_{T^{1/4}}\sim 2^{-N}\binom NJ (-1)^j T^{N/4}$, which will therefore diverge if $N/4=n/2>k+1/2$, which is generically true. So, we need to do a little bit of work to coax $w$ into a suitably normalised form (without inadvertently destroying  its nice transformation properties). To that end, we have a technical lemma (the proof of which we shall for the sake of narrative continuity defer to the next Appendix):
\begin{restatable}{lemma}{lemtwo}\label{lem:2}
  For $0\leq d\leq N/2$, let $B_{N,d}:={\binom Nd}/{\binom{2d}d}$. Then, for every $0\leq y\leq1$, there is a function $\rho_y:[N]\to\mbr^+$  such that, for all $2d\leq s\leq N$, we have $\widehat{\rho_y}(s)=y^{s-2d}$, and $\sum_{j=0}^N\rho_y(j)\leq B_{N,d}$ (at  $y=0$, we  take $y^{s-2d}=\delta_{s,2d}$).
\end{restatable}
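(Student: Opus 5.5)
The plan is to work on the Boolean cube and pass to the Krawtchouk picture only at the end. A function $\rho$ on $[N]$ corresponds to the symmetric measure $x\mapsto\rho(|x|)/\binom{N}{|x|}$ on $\{0,1\}^N$. With characters $\chi_S(x)=(-1)^{|x\cap S|}$, the quantity $\widehat\rho(s)=\sum_j\rho(j)\kappa_s^{(N)}(j)$ is then the Fourier–Walsh coefficient of that measure at any $S$ with $|S|=s$. In particular the total mass is $\sum_j\rho(j)=\widehat\rho(0)$, since $\kappa_0\equiv1$. So the task is to find a symmetric \emph{nonnegative} measure whose Fourier coefficients are prescribed in degrees $s\geq 2d$, whose low-degree coefficients ($s<2d$) are free, and whose degree-$0$ coefficient is at most $B_{N,d}$. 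Nonnegativity is the real constraint, and I would secure it by writing $\rho$ as $\pi_N$ times a square, or times a product of nonnegative factors.

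\textbf{Step 1 ($y=0$).} Take
\[
\rho_0(j):=\frac{\pi_N(j)\,K_d^{(N)}(j)^2}{\binom{2d}{d}}\geq0 .
\]
On the cube, $K_d^{(N)}(|x|)=\sum_{|A|=d}\chi_A(x)$. Squaring gives $\sum_{A,B}\chi_{A\triangle B}$, which has Fourier support in degrees $\leq 2d$, so $\widehat{\rho_0}(s)=0$ for $s>2d$. A given $S$ with $|S|=2d$ arises as $A\triangle B$ exactly when $A\sqcup B=S$, which happens for $\binom{2d}{d}$ ordered pairs; hence $\widehat{\rho_0}(2d)=1$. The degree-$0$ coefficient comes from the pairs with $A=B$, so by the orthogonality relation \eqref{eq:ortho},
\[
\sum_j\rho_0(j)=\frac{\binom{N}{d}}{\binom{2d}{d}}=B_{N,d}.
\]
This settles $y=0$ with equality in the mass bound.

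\textbf{Step 2 (general $y$).} I would insert noise in a positivity-preserving way that produces exactly the factor $y^{s-2d}$ and nothing else. First try: weight the complement of the ``active'' set by nonnegative factors $(1+y\chi_i)$,
\[
\rho_y(x)\propto\binom{2d}{d}^{-1}\sum_{|A|=d}\chi_A(x)\,\Phi_{A}(x),
\]
where $\Phi_A$ is built so that the cross terms telescope. Using the generating function \eqref{eq:kgen} and the identity $\widehat{\nu_\tau}(s)=\tau^s$ from \eqref{eq:ntt}, one checks that the coefficient at $|S|=s$ of such an expression is $y^{s-2d}$ times a combinatorial count. The naive choice, $\sum_{A\cap B=\emptyset}\chi_A\chi_B\prod_{i\notin A\cup B}(1+y\chi_i)$, overcounts by a factor $\binom{s}{2d}$, which must be removed. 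I would therefore look for a \emph{mixture} $\rho_y=\int \rho_0^{(t)}\,d\mu_y(t)$ of Step-1-type squares $\pi_N\bigl(\sum_{|A|=d}\chi_A\prod_{i\notin A}(1+t\chi_i)^{1/2}\bigr)^2$, with $\mu_y$ a probability measure on $[0,1]$. Each square is nonnegative, and its degree-$0$ coefficient is at most $\binom{N}{d}$ because the $(1+t\chi_i)$ factors average to $1$. The mixing measure $\mu_y$ would be chosen (moment-matching, in the spirit of the beta-prime argument above) so that the coefficient in degree $s\geq 2d$ collapses to $y^{s-2d}$. Positivity then holds automatically, and the mass bound $B_{N,d}$ is inherited from Step 1 by convexity.

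\textbf{Main obstacle.} The hard part is Step 2: exhibiting a positive construction whose high-degree coefficients are \emph{exactly} $y^{s-2d}$ and not $y^{s-2d}$ times a polynomial in $s$. The obvious alternative, $y^{-2d}$ times the noise operator applied to $\rho_0$, has the wrong support and blows up the mass as $y^{-2d}$. If the mixture ansatz fails, I would fall back on verifying positivity directly. The plan there is to define $\rho_y$ through its transform ($y^{s-2d}$ for $s\geq 2d$, and the low-degree values chosen to make $\rho_y$ a constant multiple of $\pi_N$ times a perfect square in $j$). I would then check nonnegativity via the three-term recurrence \eqref{eq:threerec}, since $\phi_s$-expansions of squares are controlled by that recurrence.
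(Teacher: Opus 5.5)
Your Step 1 is correct and is the paper's $y=0$ construction. Your Boolean-cube count of ordered pairs with $A\sqcup B=S$ is a clean substitute for the paper's leading-coefficient argument. The genuine gap is that Step 2, which is the entire content of the lemma for $0<y<1$, is never carried out. No $\rho_y$ is exhibited, so none of the three requirements (transform identity, positivity, mass bound) is proved.

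The specific claims you do make there fail:
\begin{enumerate}[(i)]
\item Consider your squares $\bigl(\sum_{|A|=d}\chi_A\prod_{i\notin A}(1+t\chi_i)^{1/2}\bigr)^2$. Only the diagonal terms $A=B$ have mean $1$. An off-diagonal pair contributes $c(t)^{|A\triangle B|}$, where $c(t)=\tfrac12(\sqrt{1+t}-\sqrt{1-t})>0$. Hence the degree-$0$ coefficient is $\binom Nd\sum_{j}\binom dj\binom{N-d}{j}c(t)^{2j}$, which exceeds $\binom Nd$ for $t>0$. The mass bound therefore cannot be inherited by convexity.
\item The high-degree coefficients of these squares are mixed expressions in $b(t)=\tfrac12(\sqrt{1+t}+\sqrt{1-t})$, $c(t)$ and $t$. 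There is no reason a single mixing measure would collapse them to $y^{s-2d}$ for every $2d\le s\le N$.
\item Your fallback is vacuous. By orthogonality, every admissible $\rho$ satisfies $\rho/\pi_N=y^{-2d}G_y+P$, with $G_y(j)=(1+y)^{N-j}(1-y)^j$ and $\deg P<2d$. Any nonnegative function on $\{0,\dots,N\}$ is an interpolated square, so choosing the low-degree part to make $\rho/\pi_N$ a square just restates the problem.
\end{enumerate}

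The missing idea is to start from $y^{-2d}\nu_y$, i.e.\ noise applied to the point mass at $0$, not to $\rho_0$. Its transform is already exactly $y^{s-2d}$ for all $s$ by \eqref{eq:ntt}. You then spend your low-degree freedom subtracting $\pi_N y^{-2d}H_y$, where $H_y$ is the degree-$(2d-1)$ Hermite interpolant of $G_y$ at the zeros $\xi_i$ of $K_d^{(N)}$.
\begin{enumerate}[(i)]
\item Positivity follows from the Hermite remainder formula, because $G_y^{(2d)}>0$.
\item Orthogonality leaves the transform unchanged for $s\ge 2d$.
\item Gauss quadrature turns the mass into $y^{-2d}\bigl(1-\sum_i\omega_iG_y(\xi_i)\bigr)$.
\end{enumerate}

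The genuinely hard step is then showing this mass is at most $B_{N,d}$, a bound that is tight as $y\to0$. The paper substitutes $y=\tanh u$ and writes the mass as $y^{-2d}\cosh^{-N}(u)\bigl[(e^{uA})_{00}-(e^{uA_{<d}})_{00}\bigr]$, where $A$ is the Krawtchouk Jacobi matrix. It bounds walks from $0$ to $0$ that reach $d$ by walks from $0$ to $2d$, using a reflection bijection with weight ratio at most $B_{N,d}/\sqrt{\binom N{2d}}$. Finally it evaluates $(e^{uA})_{0,2d}$ via Dicke states. Your proposal contains nothing that addresses this quantitative bound.
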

Now, with $d=n-2k$, let $\rho_y$ be the function supplied by the lemma, and consider $\mu_y(j) = (-1)^j\rho_y(j)$.
Using again the identity $(-1)^j\kappa_s^{(N)}(j)=\kappa_{N-s}^{(N)}(j)$, we have that $\widehat{\mu_y}(4r)=\widehat{\rho_y}(2n-4r)=y^{2n-4r-2d}=y^{4k-4r}$, and subsequently that $\widehat{\mu_{T^{-1/4}}}(4r)=T^{r-k}$. Consider then, with $\boldsymbol{1}$ the indicator function,   the updated ansatz 
\begin{equation}
    w(j)= \mbe_T\big[\boldsymbol{1}_{T\leq 1}\nu_{T^{1/4}}(j)+\boldsymbol{1}_{T> 1}T^k\mu_{T^{-1/4}}(j)\big],
\end{equation}
which by construction satisfies $\widehat{w}(4r)=q_r$, and furthermore that 
\begin{equation}\label{eq:magsum}
    \sum_j \lvert w(j)\rvert \leq \Pr(T\leq 1) + B_{N,d}\mbe_T[T^k\boldsymbol{1}_{T> 1}]\leq 1 + B_{N,d}q_k \leq 2\,,
\end{equation}
where $B_{N,d}q_k\leq 1$ can be verified by direct algebra.   Finally, $\sum_j w_j = \widehat{w}(0)=q_0=1$; combined with Eq.~\eqref{eq:magsum}   it immediately follows that $\sum_{j\,:\,w_j>0} w_j\leq 3/2$.  

\end{proof}

\noindent
Next we prove Corollary~\ref{crl:mult} (restated for convenience):
\crlmult*
\begin{proof}
As mentioned in the main text, this is essentially an application of the triangle inequality. Indeed, for a given unknown target state $\rho$, the total (centered) estimator decomposes into contributions from each component:
\begin{equation}
\hat o-\mathbb E_\rho[\hat o] = \sum_{k=1}^K\big[ \hat o_{2k}-\mathbb E_\rho[\hat o_{2k}]\big]
\end{equation}
Now, the variance is just the squared 2-norm of this variable (weighted by the probabilities), i.e.
\begin{align}
\Var_\rho [\hat{o}] &= \mbe_\rho [(\hat o-\mathbb E_\rho[\hat o])^2]\\
&= \sum_{U,w} \tr[\rho U\ad \Pi_w U]  (\hat o(U,w)-\mathbb E_\rho[\hat o])^2\,
\end{align}
which is the 2-norm (squared) of the vector $\boldsymbol{v}$ with components (addressed by pairs $(U,w)$) given by 
\begin{equation}
    \boldsymbol{v}_{(U,w)} = \sqrt{\tr[\rho U\ad \Pi_w U]}(\hat o(U,w)-\mathbb E_\rho[\hat o])\,.
    \label{eq:def_v}
\end{equation}
We can therefore apply the triangle inequality (with respect to this 2-norm) to obtain
\begin{equation}
   \sqrt{ \Var_\rho [\hat{o}]} = \|\boldsymbol{v}\|_2 \leq \sum_{k=1}^K  \|\boldsymbol{v}_k\|_2 =\sum_{k=1}^K \sqrt{ \Var_\rho [\hat{o}_{2k}]} \leq \sum_{k=1}^K \sqrt{\frac{3}{2a_{n,k}}}\|O_{2k}\|_\infty\,,
\end{equation}
where $\boldsymbol{v}_k$ is defined analogously to $\boldsymbol{v}$ in Eq.~\eqref{eq:def_v}, with $\hat{o}$ replaced by $\hat{o}_{2k}$.
This immediately yields Eq.~\eqref{eq:mult1}. Eq.~\eqref{eq:mult2} then follows from the generic inequality $(\sum_{k=1}^K x_k)^2\leq K\sum_{k=1}^K x_k^2$.

\end{proof}

\section{Further minutiae}
\noindent
In this appendix we supply the proof of Lemma~\ref{lem:2}, which we restate for convenience:
\lemtwo*
\begin{proof}
  We will work through this in a case-by-case fashion. First, if $d=0$ we can by Eq.~\eqref{eq:ntt} satisfy the transform condition by simply taking $\rho_y = \nu_y$; by Eq.~\eqref{eq:ntau} $\nu_y$ is a probability distribution, and therefore further satisfies $\sum_{j=0}^N\rho_y(j)=1=B_{N,0}$. So, let us take $d\geq 1$. We next take care of the special cases $y=0,1$. If $y=0$, we want a function $\rho_0$ such that $\widehat{\rho_y}(s)=\delta_{s,2d}$; let us show that  
  \begin{equation}
      \rho_0(j) = \pi_N(j) \frac{\big(K_d^{(N)}(j)\big)^2}{\binom{2d}{d}}
  \end{equation}
  suffices. First, its Krawtchouk transform is certainly proportional to $\delta_{s,2d}$,
  as $K_d^{(N)}(j)^2$ is a polynomial of degree $2d$ in $j$, so that by orthogonality its inner product with $\kappa_s^{(N)}$ must vanish for $s> 2d$. Concretely, it follows from Eq.~\eqref{eq:kgen} that the leading coefficient of $K_s^{(N)}$ is ${(-2)^s}/{s!}$ (so that  the leading coefficients of $(K_d^{(N)})^2$ and $K_{2d}^{(N)}$ and are respectively ${4^d}/{(d!)^2}$ and ${4^d}/{(2d)!}$); that is, 
   \begin{align}
      \widehat{\rho_0}(s)&=\sum_j \Bigg( \pi_N(j) \frac{\big(K_d^{(N)}(j)\big)^2}{\binom{2d}{d}}\Bigg)\kappa_s^{(N)}(j) \\
      &=\frac{\delta_{s\leq 2d}}{\binom{2d}{d}}\sum_j \Bigg( \pi_N(j) \big(K_d^{(N)}(j)\big)^2\Bigg)\kappa_{2d}^{(N)}(j) \\
      &= \frac{\delta_{s\leq 2d}}{\binom{2d}{d}} \sum_j  \pi_N(j) \Big( \frac{4^d (2d)!}{4^dd!^2}K_{2d}^{(N)}(j)
+\text{lower-degree terms}\Big)\kappa_{2d}^{(N)}(j)\\
&= \delta_{s\leq 2d}\,,
  \end{align}
  where we have used Eq.~\eqref{eq:ortho}. Recalling the assumption  $2d\leq s\leq N$ of the lemma, we conclude that in the relevant range the transform is indeed $\widehat{\rho_0}(s)=\delta_{s,2d}$. The orthogonality relations of Eq.~\eqref{eq:ortho} also immediately yield $\sum_j \rho_0(j) = \binom{N}{d}/\binom{2d}{d}=B_{N,d}$,  we also satisfy the normalisation demand of the lemma.  The case $y=1$ is even simpler. In this instance we seek a function $\rho_1$ such that $\widehat{\rho_1}(s)=1^{s-2d}=1$; as it follows from Eq.~\eqref{eq:kgen} that $\kappa_s^{(N)}(0)=1$ for all $s$, we can simply take $\rho_1(j)=\delta_{j,0}$. This choice also  evidently satisfies the mass   requirement, as $\sum_j \rho_1(j) = 1\leq B_{N,d} $, as $N\geq 2d$. This concludes the analysis of the easy edge cases.\\
  
  So, let us take $d\geq 1$ and $0<y<1$. We will need a few elementary results from the theory of orthogonal polynomials. First, with  $\xi_1,\ldots,\xi_d$  the roots of $K_d^{(N)}$, there exist so-called  \textit{Gaussian-quadrature weights} $\omega_1,\ldots,\omega_d>0$ of $\pi_N$ (the measure, recall, with respect to which the Krawtchouk polynomials are orthogonal) such that we have for any polynomial $P$ with $\deg P\leq2d-1$ that~\cite{gautschi2004orthogonal,golub1969calculation}
\begin{equation}\label{eq:interp}
  \sum_{j=0}^N\pi_N(j)P(j)
  =\sum_{i=1}^d\omega_iP(\xi_i)\,.
\end{equation}
Now, introducing the function  
\begin{equation}\label{eq:gy}
  G_y(t)=\frac{\nu_y(t)}{\pi_N(t)}=(1+y)^{N-t}(1-y)^t\,,
\end{equation}
which we note to satisfy
\begin{equation}
 \sum_{j=0}^N\pi_N(j)G_y(j)=1\,, 
  \label{eq:gnorm}
\end{equation}
and letting $H_y$ be the unique polynomial of degree at most $2d-1$ that agrees with $G_y$ and its first derivative at every $\xi_i$. The next technical result that we need is the \textit{Hermite remainder formula}, which  gives~\cite{gautschi2004orthogonal}
\begin{equation}
  G_y(t)-H_y(t)=\frac{G_y^{(2d)}(\zeta_t)}{(2d)!}\prod_{i=1}^d(t-\xi_i)^2\geq0\,,\label{eq:herm_rem}
\end{equation}
for $0\leq t\leq N$, and some $\zeta_t$ in the (smallest) interval containing $t\,\cup\,\{\xi_i\}_{i=1}^d$. The non-negativity of $G_y-H_y$ asserted in Eq.~\eqref{eq:herm_rem} follows from a direct calculation which shows that   the even derivative
$G_y^{(2d)}$ is positive.  Hence we have that 
\begin{equation}
  \rho_y(j):=\pi_N(j)y^{-2d}\bigl(G_y(j)-H_y(j)\bigr)\,,
\end{equation}
is positive. Now, since $H_y$ has degree less than $2d$, it is orthogonal to the Krawtchouk polynomials of degree $s\geq2d$, so that
Eq.\eqref{eq:gy} shows, for $s\geq2d$, that
\begin{align}
  \widehat{\rho_y}(s)&=\sum_j\rho_y(j)\kappa_s^{(N)}(j)\\
  &=y^{-2d}\sum_j\pi_N(j)G_y(j)\kappa_s^{(N)}(j)\\
  &=y^{-2d}\sum_j\nu_y(j)\kappa_s^{(N)}(j)\\
  &=y^{-2d}\widehat{\nu_y}(s)\\
  &=y^{s-2d}\,,
\end{align}
which is exactly how we want the Krawtchouk transform of $\rho_y$ to behave. The remaining demand of the lemma it to control its mass (i.e. $\sum_j\rho_y(j)$), and is the task to which we now turn. \\

\noindent
Now, from Eqs.~\eqref{eq:interp} and~\eqref{eq:gy}, we have the preliminary simplification
\begin{align}
\sum_j\rho_y(j)&=y^{-2d}\Bigl(\sum_{j=0}^N\pi_N(j)G_y(j)-\sum_{j=0}^N\pi_N(j)H_y(j)\Bigr)\\
&= y^{-2d}\Bigl(1-\sum_{i=1}^d\omega_iH_y(\xi_i)\Bigr)\\
  &= y^{-2d}\Bigl(1-\sum_{i=1}^d\omega_iG_y(\xi_i)\Bigr)\,,\label{eq:mass}
\end{align}
  where we have used  Eqs.~\eqref{eq:gnorm} and~\eqref{eq:interp}, and that by construction $H_y$ and $G_y$ agree at every $\xi_i$. 
  We also notice that the substitution $y=\tanh u$ in Eq.~\eqref{eq:gy} leads via elementary means to 
  \begin{equation}\label{eq:gexp}
  G_y(j)=(\cosh u)^{-N}e^{u(N-2j)}\,.
\end{equation}

The previous result turns out to be a very important reformulation. In particular, upon seeing the factor $e^{u(N-2j)}$,  one cannot help but recall the recurrence relation Eq.~\eqref{eq:threerec}; indeed this will turn out to be the key to further progress. Briefly taking something of a step back, consider the Hilbert space $\mcf=\ell^2([N], \pi_N)$, i.e. the space of functions from $[N]$ to $\mbc$, equipped with the inner product $\langle f|g\rangle_{\pi_N}  = \sum_j \pi_N(j)\overline{f(j)}g(j)$. Evidently the functions $\phi_s$ form an orthonormal basis of $\mcf$,
  and Eq.~\eqref{eq:threerec} is the statement that the matrix representation (with respect to this basis) of the linear operator $A$ which represents multiplication by
  $x_j:=N-2j$ takes the form 
  \begin{equation}
    A = \begin{pmatrix}  
      0 & a_0 & 0 & \cdots\vspace{1.5mm}\\
      a_0 & 0 & a_1 & \cdots\\
      0 & a_1 & 0 & \ddots\\
      \vdots & \vdots & \ddots & \ddots\\
     \end{pmatrix} \,,
    \label{eq:a}
  \end{equation}
where $a_s = \sqrt{(s+1)(N-s)}$. 
One could think of $A$ as a  (weighted) adjacency matrix of the chain $(0,1,\ldots,N)$, where the edge $\{s,s+1\}$ has weight $a_s$.
Now, consider the family (indexed by $L\geq 0$) of vectors $m^{(L)}\in\mbr^{N+1}$ whose $s$\textsuperscript{th} coordinates are given by  $m_s^{(L)} = \sum_{j=0}^N \pi_N(j)(x_j)^L\phi_s(j)$ for $0\leq s\leq N$. For example,
$m^{(0)}_s =  \sum_{j=0}^N \pi_N(j)\phi_s(j)=\langle \phi_s|\phi_0\rangle_{\pi_N}=\delta_{s,0}$, so that $m^{(0)}=e_0$ as a vector in $\mbr^{N+1}$. 
Now, from Eq.~\eqref{eq:threerec}, we have
\begin{equation}
  m_s^{(L+1)} =   \sum_{j=0}^N \pi_N(j)(x_j)^{L+1}\phi_s(j) = a_{s-1}m_{s-1}^{(L)}+a_{s}m_{s+1}^{(L)}\,,
\end{equation}
so that, in vector form, $m^{(L)}=Am^{(L-1)}=A^Le_0$; taking the 0\textsuperscript{th} coordinate (and using $\phi_0=1)$ then gives
\begin{equation}
  (A^L)_{00} = (m^{(L)})_0=\sum_{j=0}^N \pi_N(j)x_j^{L}= \sum_{j=0}^N \pi_N(j)(N-2j)^{L}\,.
\end{equation}
By linearity and the power series of the exponential we find
\begin{equation}
  (e^{uA})_{00} = \sum_{j=0}^N \pi_N(j)e^{u(N-2j)}\,;\label{eq:expzz}
\end{equation}
in conjunction with Eqs.~\eqref{eq:gnorm} and~\eqref{eq:gexp}, we conclude that $  (e^{uA})_{00} =\cosh^N(u)$. 

Far from clear though it may be at this point, the utility of the previous equality will turn out to be that we can substitute $(e^{uA})_{00}\cosh^{-N}(u)$ for the ``1'' on the right hand side of Eq.~\eqref{eq:mass} (recall that our present goal  is to show  that the right hand side of Eq.~\eqref{eq:mass} is upper bounded by $B_{N,d}$). 
Thus, let us turn to the other term on the right hand side of Eq.~\eqref{eq:mass}, namely $\sum_{i=1}^d\omega_iG_y(\xi_i)$. We will play a similar game to the one that led to Eq.~\eqref{eq:expzz}, but this time considering the vector $q^{(L)}\in\mbr^d$ with components $q_s^{(L)}  = \sum_{i=1}^d\omega_i(N-2\xi_i)^L\phi_s(\xi_i)$, for $0\leq s\leq d-1$. To begin, we note that as  $\phi_s$  has degree $s\leq d-1$, Eq.~\eqref{eq:interp} gives
\begin{equation}
  q_s^{(0)}  = \sum_{i=1}^d\omega_i\phi_s(\xi_i) = \sum_{j=0}^N\pi_N(j)\phi_s(j)  =\delta_{s,0}\,,
\end{equation}
so that $q^{(0)}  = e_0$. Now, evaluating Eq.~\eqref{eq:threerec} at $j=\xi_i$ gives $q_s^{(L+1)} = a_{s-1}q_{s-1}^{(L)}+a_{s}q_{s+1}^{(L)}$; the key difference from the previous calculation is that $\phi_d(\xi_i)=0$, so that the recursion restricts exactly to the indices $0,\ldots,d-1$. In other words, and denoting  by $A_{<d}$ the $d\times d$ ``top left hand corner'' of $A$, we have $q^{(L)}=A_{<d}^L e_0$. Taking again 0\textsuperscript{th} coordinates leads as before to 
\begin{equation}
  (e^{uA_{<d}})_{00} = \sum_{i=1}^d \omega_ie^{u(N-2\xi_i)}\,.
\end{equation}
It now follows immediately from Eq.~\eqref{eq:gexp} that we have
\begin{equation}
  1-\sum_{i=1}^d\omega_iG_y(\xi_i) = (\cosh u)^{-N}\big((\cosh u)^N -   (e^{uA_{<d}})_{00} \big)= (\cosh u)^{-N}\big((e^{uA})_{00} -   (e^{uA_{<d}})_{00} \big)\,.\label{eq:expsub}
\end{equation}
The nice thing about formulating things in this manner is that, with 
one thinking of $A$ as a (weighted) adjacency matrix of the chain $(0,1,\ldots,N)$, where the edge $\{s,s+1\}$ has weight $a_s$, the difference  $(e^{uA})_{00} -   (e^{uA_{<d}})_{00} $ is exactly the weight of the walks from 0 to 0 that visit the  vertex $d$ (possibly more than once). These walks are themselves in bijection with the walks from $0$ to $2d$. One natural map, which we will call $\Upsilon$, is obtained by splitting a walk from $0$ to $0$ at the first timestep at which it reaches the vertex $d$ into subwalks $P$ (from $0$ to $d$) and $Q$ (from $d$ back to $0$). We then construct a walk from $0$ to $2d$ by traversing $Q$ in reverse, followed by the walk obtained by both reversing and reflecting $P$ about $d$.  Essentially the point is that on the one hand a walk from 0 to 0 which reaches the vertex $d$ must hit $d$ at some minimal time (and maybe again later), and a walk from 0 to $2d$ must hit $d$ at some maximal time (and then never again); our bijection is constructed to match up the otherwise unconstrained subwalks in the two cases. The utility of all this is that we can bound the total weight of the walks from 0 to 0 which visit $d$ in terms of the walk from zero to $2d$, which will turn out to be a little easier to calculate. \\

Now, if $m_\ell(P)$ is the number of times the walk $P$ from 0 to $d$ traverses the edge
$\{\ell,\ell+1\}$, then $m_\ell(P)$ is odd and at least one for every
$\ell<d$.  Moreover, the total  weight $\text{wt}(P)=\prod_{\ell=0}^{d-1}a_\ell^{m_\ell(P)}$ of the walk is less than the weight of the translated (and, irrelevantly for the purpose of calculating the weight, reflected) walk from  $d$ to $2d$, as $  a_{2d-1-\ell}^2-a_\ell^2=(N-2d)(2d-2\ell-1)\geq0$. It follows that
\begin{equation}
    \frac{\text{wt}(PQ)}{\text{wt}(\Upsilon(PQ))} = \prod_{\ell=0}^{d-1}\frac{a_\ell^{m_\ell(P)}}{a_{2d-1-\ell}^{m_\ell(P)}} \leq \prod_{\ell=0}^{d-1}\frac{a_\ell }{a_{2d-1-\ell} }=\prod_{\ell=0}^{d-1}\sqrt{\frac{(\ell+1)(N-\ell)}{(2d-\ell)(N-2d+\ell+1) }} = \frac{B_{N,d}}{\sqrt{\binom{N}{2d}}}\,,
\end{equation}
from which the  discussion of the previous paragraph implies
\begin{equation}
    (e^{uA})_{0,0} -   (e^{uA_{<d}})_{0,0} \leq \frac{(e^{uA})_{0,2d}B_{N,d}}{\sqrt{\binom{N}{2d}}}\,,\label{eq:expbound}
\end{equation}
with it remaining to evaluate $(e^{uA})_{0,2d}$. To get there, we have a cute   reformulation in terms of the so-called Dicke states $\{|0\rangle,|1\rangle,\ldots,|N\rangle\}$, defined in terms of the $N$-qubit computational basis states $\boldsymbol{ x}$ as
\begin{equation}
    \ket s = 
\frac{1}{\sqrt{\binom Ns}}
\sum_{|\boldsymbol{ x}| = s}|\boldsymbol{ x}\rangle \,.
\end{equation}
Indeed, consider the symmetrised Pauli $X$ operator, $X_{\rm sym}=\sum_{i=1}^N X_i$. Evidently $X_{\rm sym}$ maps a Dicke state $\ket s$ to some linear combination of $\ket{s\pm 1}$; in fact it is not difficult to see that we have $X_{\rm sym}\ket s = a_{s-1}\ket{s-1}+a_s\ket{s+1}$,  which  is exactly the action of $A$ from Eq.~\eqref{eq:a}! Now, as the $X_i$ commute, we have $e^{uX_{\rm sym}} = \bigotimes_{i=1}^N(\cosh (u)\id + \sinh(u)X_i)$, so that 
\begin{equation}
    (e^{uA})_{0,s}=\sqrt{\binom{N}{s}}\sinh^s(u)\cosh^{N-s}(u)\,.
\end{equation}
Combining Eqs.~\eqref{eq:mass},~\eqref{eq:expsub} and~\eqref{eq:expbound} at $s=2d$ then yields
\begin{equation}
    \sum_j\rho_y(j) \leq B_{N,d} y^{-2d} \cosh^{-N} (u)\sinh^{2d}(u)\cosh^{N-2d}(u) = B_{N,d} y^{-2d}\tanh^{2d}(u)=B_{N,d}\,,
\end{equation}
because $y=\tanh u$. This concludes the proof of the lemma.

\end{proof}

\end{document}